\newtheorem{theorem}{Theorem}
\newtheorem*{remark}{Remark}
\renewcommand{\t}{^{\mbox{\tiny\sf T}}} 
\newcommand{\blkdiag}{\text{blkdiag}}
\newcommand{\R}{\mathbb{R}}
\renewcommand{\P}{\mathbb{P}}
\newcommand{\E}{\mathbb{E}}
\newcommand{\tr}{\text{tr}}
\renewcommand{\det}{\text{det}}
\title{\LARGE \bf Discrete-time Optimal Covariance Steering via\\ Semidefinite Programming}
\author{George Rapakoulias$^{1}$ and Panagiotis Tsiotras$^{2}$
\thanks{$^{1}$ Ph.D. student, School of Aerospace Engineering, Georgia Institute of Technology, Atlanta, GA, 30332, USA. Email:
 {\tt\small grap@gatech.edu}}%
\thanks{$^{2}$ David and Andrew Lewis Chair and Professor, School of Aerospace Engineering, and Institute for Robotics and Intelligent Machines, Georgia Institute of Technology, Atlanta, GA, 30332, USA. Email:
 {\tt\small tsiotras@gatech.edu}}%
}
\begin{document}

\maketitle

\begin{abstract}
This paper addresses the optimal covariance steering problem for stochastic discrete-time linear systems subject to probabilistic state and control constraints. 
A method is presented to efficiently attain the exact solution of the problem based on a lossless convex relaxation of the original non-linear program using semidefinite programming. 
Both the constrained and the unconstrained versions of the problem with either equality or inequality terminal covariance boundary conditions are addressed. We first prove that the proposed relaxation is lossless for all of the above cases.
Numerical examples are then provided to illustrate the proposed method. 
Finally, a comparative study is performed on systems of various sizes and steering horizons to illustrate the advantages of the proposed method in terms of computational resources compared to the state of the art. 
\end{abstract}

\section{Introduction}

The Covariance Control (CC) problem for linear systems was initially posed by A. Hotz and R. Skelton in \cite{hotz1987covariance}. 
It was studied in an infinite horizon setting for both continuous and discrete-time systems and the authors provided a parametrization for all linear state feedback controllers that achieve a specified system covariance. 
Later, the authors in \cite{grigoriadis1997minimum} provided analytical solutions for the minimum effort controller that achieves a specified steady-state system covariance in the same setting.

Its finite horizon counterpart, the Covariance Steering (CS) problem, gained attention only recently.
Although similar ideas can be traced back in the  Stochastic Model Predictive Control literature~\cite{primbs2009stochastic,farina2013probabilistic}, in the sense that these methods also try to address constraints in the system covariance, they achieve this objective by using conservative approximations or by solving computationally demanding non-linear programs.
Covariance Steering theory, on the other hand, 
offers a more direct approach, often providing tractable algorithms for the solution in real time.

The first formal treatment of the CS problem was provided in \cite{chen2015optimal,chen2015optimal1} for continuous-time systems, by studying the minimum-effort finite horizon covariance steering problem in continuous time. 
Later, in \cite{bakolas2018finite} the author provided a numerical approach for solving the discrete version of the problem with a relaxed terminal covariance boundary condition using semidefinite programming.
In \cite{okamoto2018optimal} the authors introduced a constrained version of the original problem where the state and control vectors are required to remain within specified bounds in a probabilistic sense. 
Finally, its connections to Stochastic Model Predictive control were cemented in \cite{okamoto2019stochastic}.

The recently developed covariance steering theory has  been applied to a variety of problems ranging from path planning for linear systems under uncertainty \cite{okamoto2019optimal}, control of linear systems with multiplicative noise \cite{liu2022optimal_mult}, distributed robot control \cite{saravanos2021distributed}, as well as for control of non-linear \cite{ridderhof2019nonlinear, saravanos2022distributed} and non-Gaussian \cite{sivaramakrishnan2022distribution, renganathan2022distributionally} systems. 
In our previous work \cite{liu2022optimal}, we presented a new method of solving the optimal covariance steering problem in discrete time based on an exact convex relaxation of the original non-linear programming formulation of the problem. 
At the same time, but independently, the authors of \cite{balci2022covariance} used the same relaxation to solve the optimal covariance steering problem with an inequality terminal boundary condition for a system with multiplicative noise.

The contributions of this paper are two-fold. First, we extend our previous results and prove that the proposed lossless convex relaxation presented in \cite{liu2022optimal} also holds under state and control chance constraints, as well as for the case of inequality terminal boundary covariance constraint.
The motivation for this extension is straightforward; many practical applications of covariance steering theory require probabilistic constraints to characterize the feasible part of state space or limit the control effort applied to the system. 
Furthermore, the inequality terminal covariance boundary condition might better reflect the desire to limit the uncertainty of the state, rather than driving it to an exact value. 
In this paper, we establish that the proposed method can handle all variants of the optimal covariance steering problem for linear systems encountered in the literature. 
Finally, we show that the proposed method outperforms other approaches for solving the CS problem, such as \cite{bakolas2018finite} and \cite{okamoto2019stochastic}, by over an order of magnitude in terms of run-times, while also having much better scaling characteristics with respect to the steering horizon and model size. 

\section{Problem Statement}
Let a stochastic, discrete, time-varying system be described by the state space model
\begin{equation}
    x_{k+1} = A_k x_k + B_k u_k + D_k w_k,
\end{equation}
where $k=0,1,\ldots,N-1$ denotes the time step, $A_k \in \R^{n \times n}$ is the system matrix, $B_k \in \R^{n \times p}$ is the input matrix and $D_k \in \R^{n \times q}$ is the disturbance matrix. The system's state, input, and stochastic disturbance are denoted by $x_k, \; u_k$ and $w_k$, respectively. 
The first two statistical moments of the state vector are denoted by $\mu_k = \E[x_k] \in \R^n$ and $ \Sigma_k = \E [ (x_k - \mu_k) (x_k - \mu_k)\t ] \in \R^{n \times n}$.
We assume that the process noise $w_k$ has zero mean and unitary covariance.
The discrete-time finite horizon optimal covariance steering problem can be expressed as the following optimization problem: 
\begin{subequations} \label{reference_problem}
\begin{align}
& \min_{x_k, u_k} \quad J =\E [\sum_{k = 0}^{N-1} { x_k\t Q_k x_k + u_k \t R_k u_k}], \label{ref:cost}
\end{align}
such that, for all $k = 0, 1, \dots, N-1$,
\begin{align}
& x_{k+1} = A_k x_k + B_k u_k + D_k w_k, \label{ref:dyn} \\
& x_0 \sim \mathcal{N}(\mu_i, \Sigma_i), \label{ref:initial_distr} \\
& x_N \sim \mathcal{N}(\mu_f, \Sigma_f), \label{ref:final_distr} \\
& \P(x_k \in \mathcal{X}) \geq 1-\epsilon_1, \label{ref:chan_con_x}\\
& \P(u_k \in \mathcal{U}) \geq 1-\epsilon_2 . \label{ref:chan_con_u}
\end{align}
\end{subequations}
For the rest of this paper, we will assume that $R_k \succ 0, \; Q_k \succeq 0$ and that $A_k$ is invertible for all $k = 0,1,\dots, N-1$. 

The decision variables for problem \eqref{reference_problem} are stochastic random variables, rendering it hard to solve using numerical optimization methods. 
As shown in \cite{liu2022optimal}, in the absence of the chance constraints \eqref{ref:chan_con_x}, \eqref{ref:chan_con_u} this problem is solved optimally with a linear state feedback law of the form 
\begin{equation}\label{controller}
u_k = K_k(x_k-\mu_k) + v_k,
\end{equation}
where $K_k \in \R^{ p \times n}$ is a feedback gain that controls the covariance dynamics and $v_k \in \R^p$ is a feedforward term controlling the system mean. 
Using \eqref{controller}, the cost function can be written, alternatively, in terms of the first and second moments of the state as follows 
\begin{equation*}
    J = \sum_{k = 0}^{N-1} \tr (Q_k \Sigma_k) + \tr(R_k K_k \Sigma_k K_k \t) + \mu_k\t Q_k \mu_k + v_k\t R_k v_k.
\end{equation*}
If the initial distribution of the state is Gaussian and a linear feedback law as in \eqref{controller} is used, the state distribution remains Gaussian.
This allows us to write the constraints \eqref{ref:initial_distr} and \eqref{ref:final_distr} as 
\begin{equation*}
    \mu_0 = \mu_i, \quad \Sigma_0 = \Sigma_i, \quad \mu_N = \mu_f, \quad \Sigma_N = \Sigma_f.
\end{equation*}

In contrast to previous works such as \cite{bakolas2018finite} and \cite{okamoto2019optimal}, we choose to keep the intermediate states in the steering horizon as decision variables, handling them in terms of their first and second moments. 
To this end, we replace \eqref{ref:dyn} with the mean and covariance propagation equations
\begin{subequations}
    \begin{align}
        & \mu_{k+1} = A_k\mu_k + B_k v_k, \\
        & \Sigma_{k+1} = (A_k + B_k K_k) \Sigma_k (A_k+B_k K_k)\t + D_k D_k\t.
    \end{align}
\end{subequations}
Omitting the chance constraints \eqref{ref:chan_con_u} and \eqref{ref:chan_con_x} for the moment, the problem is recast as a standard non-linear program
\begin{subequations} \label{NLP}
\begin{align}
& \min_{\Sigma_k, K_k, \mu_k, v_k} J = \sum_{k = 0}^{N-1} \tr (Q \Sigma_k) + \tr(R K_k \Sigma_k K_k \t) \nonumber \\
& \qquad \qquad \qquad \qquad + \mu_k\t Q \mu_k + v_k\t R_k v_k,  \label{NLP:cost}
\end{align}
such that, for all $k = 0, 1, \dots, N-1$,
\begin{align}
& \Sigma_{k+1} = A_k \Sigma_k A_k\t + B_k K_k \Sigma_k A_k\t + A_k \Sigma_k K_k\t B_k\t  \nonumber \\
& \quad \qquad + B_k K_k \Sigma_k K_k \t B_k\t + D_k D_k\t, \label{NLP:cov_dyn} \\
&  \Sigma_0 = \Sigma_i,  \label{NLP:in_cov} \\
&  \Sigma_N = \Sigma_f, \label{NLP:final_cov} \\
& \mu_{k+1} = A_k \mu_k + B_k v_k, \label{NLP:mean_dyn} \\
& \mu_0 = \mu_i, \label{NLP:in_mean} \\
& \mu_N = \mu_f. \label{NLP:final_mean} 
\end{align}
\end{subequations}
In the following sections, we will convert this problem to an equivalent convex one. 

\section{Unconstrained Covariance Steering}

It is well established in the covariance steering literature that under no coupled mean-covariance constraints, problem \eqref{NLP} can be decoupled into the mean steering problem and the covariance steering problem \cite{bakolas2018finite, okamoto2018optimal}.
The solution to the mean steering is trivial, therefore, we focus solely on the covariance steering, which corresponds to the optimization problem
\begin{align}
\min_{\Sigma_k, K_k} \quad & J_\Sigma = \sum_{k = 0}^{N-1} {\tr\big(Q_k \Sigma_k \big) + \tr \big(R_k K_k \Sigma_k K_k\t \big)}, \label{NLPcov} \\ 
& \textrm{subject to \eqref{NLP:cov_dyn}-\eqref{NLP:final_cov}}, \nonumber 
\end{align}
Using the change of variables $U_k = K_k \Sigma_k$ and the convex relaxation proposed in \cite{liu2022optimal} one can transform Problem \eqref{NLPcov} into a linear semidefinite program 
\begin{subequations} \label{convex_eq}
\begin{align}
& \min_{\Sigma_k, U_k, Y_k} \quad J_\Sigma = \sum_{k = 0}^{N-1} {\tr \big(Q_k \Sigma_k \big) + \tr \big(R_k Y_k \big)} 
\end{align}
such that, for all $k = 0, 1, \dots, N-1$,
\begin{align}
& C_k \triangleq U_k \Sigma_k^{-1} U_k\t - Y_k \preceq 0, \label{convex_eq:relaxation} \\
& G_k \triangleq A_k \Sigma_k A_k\t + B_k U_k A_k\t + A_k U_k\t B_k\t + B_k Y_k B_k\t \nonumber \\
& \qquad + D_k D_k\t - \Sigma_{k+1} = 0,\label{convex_eq:cov_dyn} \\
& \Sigma_N - \Sigma_f = 0, \label{convex_eq:final_cov} 
\end{align}
\end{subequations}
where the constraint \eqref{convex_eq:relaxation} can be expressed as an LMI using the Schur complement as
\begin{equation*}
\begin{bmatrix} 
\Sigma_k & U_k\t \\
U_k & Y_k
\end{bmatrix} \succeq 0.
\end{equation*} 
\begin{theorem} \label{thm:lossless1}
The optimal solution to the relaxed problem \eqref{convex_eq} satisfies $C_k = 0$ for all $k = 0,1,\dots, N-1$ and therefore also optimally solves \eqref{NLPcov} \cite{liu2022optimal}.
\end{theorem}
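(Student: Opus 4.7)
The plan is to attack this via the KKT conditions of the semidefinite program \eqref{convex_eq}: the goal is to show that the dual variable of the LMI \eqref{convex_eq:relaxation} is strictly positive definite, so that complementary slackness forces $C_k=0$. To this end, I would introduce $\Lambda_k\succeq 0$ for the LMI, a symmetric multiplier $M_k$ for each equality in \eqref{convex_eq:cov_dyn}, and a symmetric $M_N$ for the terminal condition \eqref{convex_eq:final_cov}. Since $Y_k$ enters the Lagrangian only linearly, its stationarity equation is immediate,
\begin{equation*}
\Lambda_k = R_k + B_k\t M_k B_k,
\end{equation*}
and complementary slackness gives $\Lambda_k(-C_k)=0$ with both factors PSD. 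Because $R_k\succ 0$, it suffices to establish $M_k\succeq 0$, which implies $B_k\t M_k B_k\succeq 0$, hence $\Lambda_k\succ 0$, and then $C_k=0$.

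The next step is to derive and analyze a backward recursion for $M_k$. Differentiating the Lagrangian in $U_k$, which enters quadratically through $U_k\Sigma_k^{-1}U_k\t$ inside $C_k$ and bilinearly through $B_kU_kA_k\t$ inside $G_k$, gives
\begin{equation*}
U_k\Sigma_k^{-1} = -\Lambda_k^{-1}B_k\t M_k A_k.
\end{equation*}
Substituting this identity into the $\Sigma_k$-stationarity condition (for $k=1,\dots,N-1$) collapses the quadratic terms and produces the familiar discrete-time Riccati recursion
\begin{equation*}
M_{k-1} = Q_k + A_k\t M_k A_k - A_k\t M_k B_k(R_k + B_k\t M_k B_k)^{-1}B_k\t M_k A_k.
\end{equation*}
A standard completion-of-squares argument, using $Q_k\succeq 0$ and $R_k\succ 0$ together with the invertibility of $A_k$ assumed in the problem statement, then shows that $M_k\succeq 0$ propagates backward: once a PSD terminal value is in hand, every $M_k$ is PSD.

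The delicate point, and the main obstacle, is the terminal base case. Because \eqref{convex_eq:final_cov} is an equality constraint, KKT alone does not sign-restrict $M_N$, and the stationarity condition at $\Sigma_N$ only pins $M_{N-1}=M_N$ without enforcing positivity. My plan is to handle this by a perturbation/limiting argument: soften the terminal constraint to $\Sigma_N\preceq\Sigma_f+\epsilon I$, for which the terminal dual variable \emph{is} sign-restricted by KKT and yields $M_N\succeq 0$ directly; the Riccati induction then delivers $C_k^{\epsilon}=0$ for every $\epsilon>0$, and one lets $\epsilon\downarrow 0$ using continuity of the optimizer and of the Riccati map in the problem data. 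A cleaner alternative, and the one I suspect is taken in \cite{liu2022optimal}, is to establish the inequality-terminal variant first (a setting where $M_N\succeq 0$ comes for free from complementary slackness) and then recover the equality case as its specialization. Either route gives $M_N\succeq 0$, hence $\Lambda_k\succ 0$ for every $k$, and complementary slackness then closes the proof by forcing $C_k=0$ throughout the horizon, so that any optimizer of \eqref{convex_eq} is feasible and optimal for \eqref{NLPcov}.
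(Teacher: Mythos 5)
Your overall strategy---show that the multiplier of the LMI \eqref{convex_eq:relaxation} is positive definite so that complementary slackness forces $C_k=0$---is sound in spirit, and your stationarity equation $\Lambda_k = R_k + B_k\t M_k B_k$ matches the paper's \eqref{opt_cond_Y} (with the roles of the symbols swapped). But the route you take to positive definiteness, a backward Riccati induction on the multiplier $M_k$ of the dynamics equality, founders exactly where you say it does: the terminal base case. Neither of your proposed patches closes this gap. Softening $\Sigma_N=\Sigma_f$ to the one-sided condition $\Sigma_N\preceq\Sigma_f+\epsilon I$ changes the feasible set, and its optimizer has no reason to converge to the equality-constrained optimizer as $\epsilon\downarrow 0$; a faithful two-sided perturbation $\Sigma_f-\epsilon I\preceq\Sigma_N\preceq\Sigma_f+\epsilon I$ reintroduces a terminal multiplier that is a difference of two PSD matrices and hence unsigned. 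Likewise, ``establish the inequality-terminal variant first and recover the equality case as its specialization'' runs the logic backwards: an optimizer of the inequality-terminal problem need not satisfy $\Sigma_N=\Sigma_f$, so the equality case is not a specialization of it. (Indeed, the paper's Theorem~\ref{thm:lossless2} deduces the inequality case \emph{from} the equality case via Slater's condition, not the other way around.) There is also a mild circularity in your recursion: the substitution $U_k\Sigma_k^{-1}=-\Lambda_k^{-1}B_k\t M_kA_k$ presupposes $\Lambda_k$ invertible, which is what the induction is meant to deliver; this is repairable within the induction, but only once the base case is secured.

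The paper's argument (given in \cite{liu2022optimal} for Theorem~\ref{thm:lossless1} and reproduced verbatim in the proofs of Theorems~\ref{thm:lossless2} and~\ref{lossless_2}) avoids any sign analysis of the dynamics or terminal multipliers. It argues by contradiction: if $C_k$ has a nonzero eigenvalue, then $\tr(M_k^{(1)}C_k)=0$ with $M_k^{(1)}\succeq 0$ and $C_k\preceq 0$ forces $M_k^{(1)}$ to be \emph{singular}; solving \eqref{opt_cond_U} for $B_k\t\Lambda_k$ (using invertibility of $A_k$) and substituting into \eqref{opt_cond_Y} yields the purely algebraic identity $R_k = M_k^{(1)}\big(I_p + U_k\Sigma_k^{-1}A_k^{-1}B_k\big)$, whose determinant is zero---contradicting $R_k\succ 0$. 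This is local in $k$, needs no backward induction, and works identically for equality and inequality terminal conditions, which is precisely why the paper can extend it so cheaply to the constrained and inequality-terminal variants. The remaining ingredient, sufficiency of the KKT conditions (zero duality gap), is invoked from \cite{liu2022optimal}; your proposal does not address it.
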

\begin{remark}
A different approach that results in the same formulation is that of a randomized feedback control policy presented in \cite{balci2022exact}. 
Therein, the injected randomness on the control policy can be interpreted as a slack variable converting \eqref{convex_eq:relaxation} to equality. 
In \cite{balci2022exact} it is shown that for the soft-constrained version of the problem, the value of this slack variable is zero. 
In our work, we tackle directly the hard-constrained version, instead, with equality or inequality terminal covariance constraints as well as chance constraints. 
In this case, strong duality is not apparent and the technique of the proof of \cite{balci2022exact} is not directly applicable.
\end{remark}
Next, consider Problem \eqref{NLPcov} but with an inequality terminal covariance boundary condition instead, and its corresponding relaxed version, namely,
\begin{subequations} \label{convex_ineq}
\begin{align}
& \min_{\Sigma_k, U_k, Y_k} \quad J_\Sigma = \sum_{k = 0}^{N-1} {\tr \big(Q_k \Sigma_k \big) + \tr \big(R_k Y_k \big)},
\end{align}
such that, for all $k = 0, 1, \dots, N-1$,
\begin{align}
& C_k \triangleq U_k \Sigma_k^{-1} U_k\t - Y_k \preceq 0, \label{convex_ineq:relaxation} \\
& \Sigma_N - \Sigma_f \preceq 0, \label{convex_ineq:cov} \\
& G_k \triangleq A_k \Sigma_k A_k\t + B_k U_k A_k\t + A_k U_k\t B_k\t + B_k Y_k B_k\t \nonumber \\
& \qquad + D_k D_k\t - \Sigma_{k+1} = 0. \label{convex_ineq:cov_dyn}
\end{align}
\end{subequations}

\begin{theorem} \label{thm:lossless2}
Assuming that the exact covariance steering problem \eqref{NLPcov} is feasible, problem \eqref{convex_ineq} satisfies $C_k = 0$ for all $k = 0,1,\dots,N-1$ 
and therefore also optimally solves \eqref{NLPcov}  with an inequality terminal covariance boundary condition, instead of \eqref{NLP:final_cov}.
\end{theorem}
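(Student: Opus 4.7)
The plan is to reduce Theorem~\ref{thm:lossless2} to the already-established Theorem~\ref{thm:lossless1} via a realized-terminal-covariance argument. Suppose an optimal triple $(\Sigma_k^*,U_k^*,Y_k^*)$ of \eqref{convex_ineq} exists, and denote its terminal covariance by $\Sigma_N^* \preceq \Sigma_f$. I would then consider the auxiliary equality-terminal relaxation obtained from \eqref{convex_eq} by replacing the target $\Sigma_f$ in \eqref{convex_eq:final_cov} with $\Sigma_N^*$, and show that this auxiliary problem is optimally solved by the same triple $(\Sigma_k^*,U_k^*,Y_k^*)$.

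The central step is a one-line contradiction argument. Feasibility of $(\Sigma_k^*,U_k^*,Y_k^*)$ for the auxiliary equality problem is immediate from $\Sigma_N^* = \Sigma_N^*$. If some $(\tilde\Sigma_k,\tilde U_k,\tilde Y_k)$ with $\tilde\Sigma_N = \Sigma_N^*$ achieved a strictly smaller cost, then, because $\Sigma_N^* \preceq \Sigma_f$, this competitor would also be feasible for \eqref{convex_ineq}, contradicting the assumed optimality of the starred triple therein. Theorem~\ref{thm:lossless1} applied to the auxiliary problem then yields $C_k^* = 0$ for every $k = 0,\dots,N-1$, and setting $K_k^* = U_k^*(\Sigma_k^*)^{-1}$ recovers a feedback gain whose cost in the non-convex problem with inequality terminal boundary condition equals the relaxed optimum, establishing losslessness.

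Two technical points merit attention. Existence of an optimizer of \eqref{convex_ineq} must be justified: feasibility of \eqref{NLPcov}, assumed in the theorem statement, immediately supplies a feasible point for \eqref{convex_ineq} with $Y_k = K_k \Sigma_k K_k\t$ and $\Sigma_N = \Sigma_f$; together with $R_k \succ 0$ and the Schur-complement LMI coupling $Y_k, U_k, \Sigma_k$, sublevel sets of the cost are bounded, so the infimum is attained by standard SDP compactness. The main obstacle I anticipate is the possibility that $\Sigma_N^*$ (or some intermediate $\Sigma_k^*$) lies on the boundary of the PSD cone, in which case the explicit form \eqref{convex_ineq:relaxation} is no longer literally meaningful and one must work throughout with the LMI form; verifying that the losslessness proof of Theorem~\ref{thm:lossless1} still goes through in this degenerate case, possibly via a perturbation argument that replaces $D_k D_k\t$ by $D_k D_k\t + \epsilon I$ and lets $\epsilon \downarrow 0$, is the delicate piece. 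Aside from this, the reduction is essentially a packaging of Theorem~\ref{thm:lossless1}.
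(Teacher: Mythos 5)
Your proof is correct, but it takes a genuinely different route from the paper's. The paper re-runs the KKT machinery directly on \eqref{convex_ineq}: it introduces an extra multiplier $M^{(2)}$ for the terminal inequality $\Sigma_N - \Sigma_f \preceq 0$, notes that the stationarity conditions in $U_k$ and $Y_k$ are unaffected by this new multiplier, reproduces the determinant contradiction $\det(R_k) = \det(M_k^{(1)})\det(I_p + U_k\Sigma_k^{-1}A_k^{-1}B_k) = 0$, and then argues that strong duality for \eqref{convex_ineq} is inherited from the equality-constrained problem because the feasible set only grows. Your argument instead freezes the realized terminal covariance $\Sigma_N^*$ of an optimizer of \eqref{convex_ineq}, observes that the same point must be optimal for the equality-terminal relaxation with target $\Sigma_N^*$ (else a cheaper competitor would also be feasible for \eqref{convex_ineq}), and invokes Theorem~\ref{thm:lossless1}. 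This is more modular and avoids any new multiplier bookkeeping; your side remarks on attainment of the optimum and on possibly singular $\Sigma_k$ address technicalities the paper leaves implicit. The one point you should state explicitly is that Theorem~\ref{thm:lossless1} is being applied with a terminal target $\Sigma_N^*$ different from the nominal $\Sigma_f$: this is legitimate because the theorem holds for any terminal covariance for which the relaxed equality-constrained problem is solvable, but if $\Sigma_N^*$ sits on the boundary of the reachable set of terminal covariances, the auxiliary problem may fail Slater's condition even when \eqref{convex_ineq} satisfies it, so the constraint-qualification burden you delegate to Theorem~\ref{thm:lossless1} is not automatically discharged. The paper's own duality argument is loose in a mirror-image way, so your proof is at least as rigorous as the one it replaces.
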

\begin{proof}
Using matrix Lagrange multipliers $M_k^{(1)}, \; M^{(2)}$, $ \Lambda_k$ for the constraints \eqref{convex_ineq:relaxation}, \eqref{convex_ineq:cov}, \eqref{convex_ineq:cov_dyn}, respectively, we define the Lagrangian function 
\begin{equation*}
\mathcal{L}( \cdot ) = J_{\Sigma} + \tr \big( M^{(2)} (\Sigma_N - \Sigma_f) \big) + \sum_{k = 0}^{N-1}{\tr \big( M_k^{(1)} C_k \big) + \tr \big( \Lambda_k G_k \big)}.
\end{equation*}
The relevant first-order optimality conditions are \cite{vandenberghe1996semidefinite}:
\begin{subequations}\label{opt_cond}
\begin{align}
& \frac{\partial \mathcal{L}}{\partial U_k} = 2 M_k^{(1)} U_k \Sigma_k^{-1} + 2 B_k\t \Lambda_k A_k = 0, \label{opt_cond_U} \\ 
& \frac{\partial \mathcal{L}}{\partial Y_k} = R_k - M_k^{(1)} + B_k\t \Lambda_k B_k = 0, \label{opt_cond_Y}\\ 
& \tr \big(M_k^{(1)} C_k \big) = 0, \label{opt_cond_comp_slack1} 
\end{align}
\end{subequations}
where $k = 0,1,\dots N-1$. 
Note that we can choose $\Lambda_k$ to be symmetric because of the symmetry of the constraint \eqref{convex_ineq:cov_dyn}, while $M_k^{(1)} $ and $ M^{(2)}$ are symmetric by definition. 
We will prove that the optimal solution to problem \eqref{convex_ineq} satisfies $C_k = 0$ for all $k = 0,1,\ldots,N-1$.
To this end, assume that $C_k$ has at least one nonzero eigenvalue for some $k$. 
Equation \eqref{opt_cond_comp_slack1} then yields that $M_k^{(1)}$ has to be singular \cite{liu2022optimal}. 
The optimality condition \eqref{opt_cond_U} can then be rewritten as $B_k\t \Lambda_k = -M_k^{(1)} U_k \Sigma_k^{-1} A_k^{-1}$. Substituting to \eqref{opt_cond_Y} yields
\begin{equation} \label{contradicion}
R_k = M_k^{(1)} \big( I_{p} + U_k \Sigma_k^{-1} A_k^{-1} B_k \big). 
\end{equation}
Calculating the determinants of both sides of \eqref{contradicion}, we obtain 
\begin{equation*}
\det(R_k) = \det(M_k^{(1)})\, \det\big( I_{p} + U_k \Sigma_k^{-1} A_k^{-1} B_k \big) = 0. 
\end{equation*}
This clearly contradicts the fact that $R_k \succ 0$. 
Therefore, at the optimal solution, the matrix $C_k$ has all its eigenvalues equal to zero. 
This, along with the fact that $C_k$ is symmetric, yields that $C_k = 0$ for all $k = 0,1,\dots,N-1$.
The final step to conclude the proof is to show that the KKT conditions \eqref{opt_cond} for the relaxed problem \eqref{convex_ineq} are sufficient for the optimal solution, or in other words, the duality gap for the relaxed problem is zero. 
We have already proved that strong duality holds for the exact covariance steering problem in \cite{liu2022optimal}.
Since the relaxed terminal boundary condition problem \eqref{convex_ineq} has a domain at least as big as the exact problem \eqref{convex_eq} and strong duality holds for the exact problem, from Slater's condition strong duality holds for the relaxed problem as well.  
\end{proof}

\section{Constrained Covariance Steering}

Many real-world applications require additional constraints of the form \eqref{ref:chan_con_u}, \eqref{ref:chan_con_x} to be imposed on the problem to reflect the physical limitations of the system or some other desired behavior. 
These may include constraints on the total control effort $u_k$ on each time step or physical limits on the state vector $x_k$. 
In this work, we assume polytopic state and control constraints of the form
\begin{subequations} \label{probabilistic_con}
    \begin{align}
        & \P(\alpha_x \t x_k \leq \beta_x) \geq 1-\epsilon_x, \\
        & \P(\alpha_u \t u_k \leq \beta_u) \geq 1-\epsilon_u,
    \end{align}
\end{subequations}
where $\alpha_x \in \R^{n}, \; \alpha_u \in \R^p,\; \beta_x, \beta_u \in \R $ and  $\epsilon_x, \epsilon_u \in [0, 0.5]$ reflects the violation probability of each constraint. 
To convert the probabilistic constraints~\eqref{probabilistic_con}  into deterministic constraints on the decision variables note that $\alpha_x\t x_k$ and $\alpha_u \t u_k$ are univariate random variables with first and second moments given by 
\begin{subequations} 
    \begin{align}
      & \hspace*{-3mm} \E(\alpha_x\t x_k) =  \alpha_x\t \mu_k, \\
        & \hspace*{-3mm}\E(\alpha_u \t u_k)  = \alpha_u \t v_k, \\
        & \hspace*{-3mm}\E ( \alpha_x\t (x_k -\mu_k) (x_k -\mu_k)\t \alpha_x)  =  \alpha_x\t \Sigma_k \alpha_x, \\
        &\hspace*{-3mm} \E(\alpha_u \t K_k (x_k-\mu) (x_k-\mu)\t K_k \t \alpha_u )  =  \alpha_u \t U_k \Sigma^{-1}_k U_k\t \alpha_u. \label{control_cov}
    \end{align}
\end{subequations}
To this end, according to \cite{okamoto2018optimal}, equations \eqref{probabilistic_con} are converted to 
\begin{subequations} \label{constraints}
    \begin{align}
        & \Phi^{-1}(1-\epsilon_x) \sqrt{ \alpha_x\t \Sigma_k \alpha_x} + \alpha_x\t \mu_k - \beta_x \leq 0, \label{chance_con} \\
        & \Phi^{-1}(1-\epsilon_u) \sqrt{ \alpha_u \t U_k \Sigma^{-1}_k U_k\t \alpha_u } + \alpha_u \t v_k - \beta_u \leq 0, \label{effort_st1}
    \end{align}
\end{subequations}
where $\Phi^{-1}(\cdot)$ is the inverse cumulative distribution function of the normal distribution. 
If the Gaussian assumption for the disturbances is dropped, then $\Phi^{-1}(\cdot)$ can be conservatively replaced using Cantelli's concentration inequality with $Q( 1 - \epsilon ) = \sqrt{{\epsilon}/{ (1 - \epsilon)}}$ \cite{renganathan2022distributionally}.

Using the same relaxation as before to handle the non-linear term $U_k \Sigma^{-1}_k U_k\t$, equation \eqref{effort_st1} is further relaxed to 
\begin{equation}    \label{effort_con}
\Phi^{-1}(1-\epsilon_u) \sqrt{ \alpha_u \t Y_k \alpha_u } + \alpha_u \t v_k - \beta_u \leq 0. 
\end{equation}
\begin{figure}
  \centering
  \includegraphics[width = .8\linewidth]{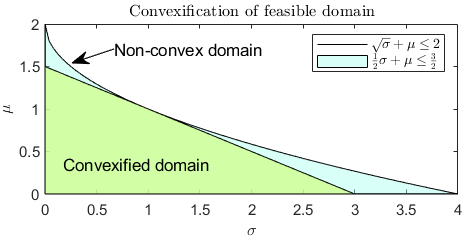}
  \caption{Example of a convexified domain for a 1-dimensional system}
  \label{fig: domain_convexification}
\end{figure}
Unfortunately, due to the square root on the decision variables $\Sigma_k$ and $Y_k$ neither of \eqref{chance_con}, \eqref{effort_con} are convex. One conservative option to overcome this issue is to linearize these constraints around some reasonable value of $\alpha_x\t \Sigma_k \alpha_x$ and $\alpha_u \t Y_k \alpha_u$, respectively, for a given problem. Because the square root is a concave function, the tangent line can serve as a linear global overestimator, yielding  
\begin{equation*}
    \sqrt{x} \leq \frac{1}{2 \sqrt{x_0}}x + \frac{\sqrt{x_0}}{2}, \quad \forall x,x_0 > 0.
\end{equation*}
The constraints in \eqref{constraints} can therefore be conservatively approximated as 
\begin{subequations} \label{constraints_lin}
    \begin{align}
         \Phi^{-1}(1-\epsilon_x) &\frac{1}{2 \sqrt{ \alpha_x\t \Sigma_r \alpha_x}} \alpha_x\t \Sigma_k \alpha_x + \alpha_x\t \mu_k  \nonumber\\
        & - \left( \beta_x - \Phi^{-1}(1-\epsilon_x) \frac{1}{2} \sqrt{ \alpha_x\t \Sigma_r \alpha_x} \right) \leq 0, \label{chance_con_lin} \\
         \Phi^{-1}(1-\epsilon_u) &\frac{1}{2 \sqrt{ \alpha_u \t Y_r \alpha_u }} \alpha_u \t Y_k \alpha_u + \alpha_u \t v_k \nonumber \\
        & - \left( \beta_u - \Phi^{-1}(1-\epsilon_u) \frac{1}{2} \sqrt{ \alpha_u \t Y_r \alpha_u} \right) \leq 0, \label{effor_con_lin}
    \end{align}
\end{subequations}
where $\Sigma_r, \; Y_r$ are some reference values. The linearized constraints now form a convex set, as illustrated in Figure \ref{fig: domain_convexification}. 
For notational simplicity, next, we consider the more general constraint form of 
\begin{subequations} \label{constraints_final}
    \begin{align}
        &  \ell \t \Sigma_k \ell + \alpha_x\t \mu_k - \beta_x \leq 0, \label{constraints_final_cov} \\ 
        &  e\t Y_k e + \alpha_u \t v_k - \beta_u \leq 0. 
    \end{align}
\end{subequations}
Given the additional constraints in \eqref{constraints_final} the fundamental question is whether the relaxation proposed in \eqref{convex_eq} remains lossless.
To this end, consider the constrained Covariance Steering problem
\begin{subequations} \label{convex_constrained}
\begin{align}
& \min_{\Sigma_k, U_k, Y_k, \mu_k, v_k} J, 
\end{align}
such that, for all $k = 0, 1, \dots, N-1$,
\begin{align}
& \mu_{k+1} = A_k \mu_k + B_k v_k, \\
& C_k(\Sigma_k, Y_k, U_k) \preceq 0,  \\
& G_k(\Sigma_{k+1}, \Sigma_k, Y_k, U_k) = 0, \\
& \ell \t \Sigma_k \ell + \alpha_x\t \mu_k - \beta_x \leq 0, \label{convex_constrained_cov} \\
& e\t Y_k e + \alpha_u \t v_k - \beta_u \leq 0.        \label{convex_constrained_effort} 
\end{align}
\end{subequations}
where $J$ is defined in \eqref{NLP:cost}. Note that an equality terminal covariance condition is implied, by excluding $\Sigma_N$ from the optimization variables and treating it as constant.
\begin{theorem} \label{lossless_2}
The optimal solution to the problem \eqref{convex_constrained} satisfies $C_k = 0$ for all $k = 0,1,\dots,N-1$. 
\end{theorem}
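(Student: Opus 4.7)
The plan is to mimic the Lagrangian/KKT contradiction argument used in the proof of Theorem~\ref{thm:lossless2}, while carefully tracking how the extra (scalar) multipliers from the chance constraints \eqref{convex_constrained_cov}--\eqref{convex_constrained_effort} enter the first-order conditions. I would first form the Lagrangian by augmenting $\mathcal{L}(\cdot)$ with $\sum_k \lambda_k (\ell\t \Sigma_k \ell + \alpha_x\t \mu_k - \beta_x)$ and $\sum_k \nu_k (e\t Y_k e + \alpha_u\t v_k - \beta_u)$, where $\lambda_k, \nu_k \geq 0$ are scalar KKT multipliers, and keep the matrix multipliers $M_k^{(1)} \succeq 0$ and $\Lambda_k$ (symmetric) for the relaxation and dynamics constraints as before.

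Next, I would write down the stationarity conditions that are relevant to the contradiction. The key observation is that $U_k$ does not appear in the newly added inequalities, so $\partial\mathcal{L}/\partial U_k = 0$ is unchanged and still yields $B_k\t \Lambda_k = -M_k^{(1)} U_k \Sigma_k^{-1} A_k^{-1}$. The only modification is in $\partial\mathcal{L}/\partial Y_k = 0$, which now reads $R_k - M_k^{(1)} + B_k\t \Lambda_k B_k + \nu_k e e\t = 0$, so that
\begin{equation*}
M_k^{(1)} = R_k + \nu_k e e\t + B_k\t \Lambda_k B_k.
\end{equation*}
The state-constraint term involving $\lambda_k$ enters only through $\partial\mathcal{L}/\partial \Sigma_k$, which is never used in the argument and can be discarded.

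Then I would replicate the contradiction step. Suppose $C_k \ne 0$ for some $k$; as in the proof of Theorem~\ref{thm:lossless2}, complementary slackness $\tr(M_k^{(1)} C_k) = 0$ together with $M_k^{(1)} \succeq 0$, $-C_k \succeq 0$ forces $M_k^{(1)}$ to be singular. Substituting $B_k\t \Lambda_k$ from the $U_k$-condition into the modified $Y_k$-condition gives
\begin{equation*}
M_k^{(1)}\bigl(I_{p} + U_k \Sigma_k^{-1} A_k^{-1} B_k\bigr) = R_k + \nu_k e e\t.
\end{equation*}
Taking determinants of both sides, the left-hand side has zero determinant (since $M_k^{(1)}$ is singular), while the right-hand side is $\succ 0$ because $R_k \succ 0$ and $\nu_k e e\t \succeq 0$, so its determinant is strictly positive. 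This is the sought contradiction, forcing $C_k = 0$, and the symmetry of $C_k$ then upgrades "all eigenvalues zero" to $C_k = 0$. If several linearized chance constraints are imposed per time step, the rank-one term is replaced by a sum of PSD rank-one terms, and the argument is unchanged.

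The main obstacle I anticipate is not the rank argument itself but justifying that KKT is \emph{sufficient}, i.e., strong duality, since now there are additional scalar inequality constraints coupling $\Sigma_k$ and $\mu_k$ (resp.\ $Y_k$ and $v_k$). I would close the argument the same way as in Theorem~\ref{thm:lossless2}: the feasible set of \eqref{convex_constrained} is a superset of the feasible set of the exact (non-relaxed) constrained covariance steering problem, for which strong duality can be obtained via Slater's condition under the feasibility assumption, so strong duality transfers to the relaxation and the KKT conditions are sufficient for optimality. Hence $C_k = 0$ at the optimum, which shows that the constrained relaxation is lossless.
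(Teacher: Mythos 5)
Your proposal is correct and follows essentially the same route as the paper: augment the Lagrangian with nonnegative scalar multipliers for the two linearized chance constraints, observe that only the $Y_k$ stationarity condition picks up the extra rank-one term $\nu_k e e\t$ (the paper's $\lambda_{3,k} e e\t$), and run the identical singularity/determinant contradiction on $M_k\bigl(I_p + U_k\Sigma_k^{-1}A_k^{-1}B_k\bigr) = R_k + \nu_k e e\t \succ 0$. Your explicit closing remark on KKT sufficiency via Slater's condition is a point the paper's proof of this theorem leaves implicit (it only spells it out for Theorem~\ref{thm:lossless2}), but it is the same argument.
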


\begin{proof}
Define again the problem Lagrangian as 
\begin{equation*} 
\begin{split}
    \mathcal{L}_a( \cdot ) =  J & +  \sum_{k = 0}^{N-1} \tr \big( M_k\t C_k \big) + \tr \big( \Lambda_{1,k}\t G_k \big) \\
    & + \lambda_{1,k}\t (\mu_{k+1} - A_k \mu_k - B_k v_k) \\
    & + \lambda_{2,k} \big(\ell \t \Sigma_k \ell + \alpha_x\t \mu_k - \beta_x \big) \\ 
    & + \lambda_{3,k} \big(e\t Y_k e + \alpha_u \t v_k - \beta_u \big). 
\end{split}
\end{equation*}
The relevant first-order optimality conditions for this problem are
\begin{subequations}\label{opt_cond_constrained}
\begin{align}
%
& \frac{\partial \mathcal{L}_a}{\partial U_k} = 2 M_k U_k \Sigma_k^{-1} + 2 B_k\t \Lambda_k A_k = 0, \label{opt_cond_constrained_U}\\ 
&\frac{\partial \mathcal{L}_a}{\partial Y_k} = R_k - M_k + B_k\t \Lambda_k B_k  + \lambda_{3,k} e e \t = 0, \label{opt_cond_constrained_Y}\\ 
%
%
& \tr \big(M_k C_k \big) = 0. \label{opt_cond_constrained_comp_slack1}
\end{align}
\end{subequations}
Following the same steps as in the proof of Theorem \ref{thm:lossless1}, let $C_k$ have at least one nonzero eigenvalue. From \eqref{opt_cond_constrained_comp_slack1}, $M_k$ has to be singular. Solving for $B_k\t \Lambda_k$ in \eqref{opt_cond_constrained_U} and substituting in \eqref{opt_cond_constrained_Y} we get  
\begin{equation} \label{contradicion2}
R_k  + \lambda_{3,k} e e\t = M_k \big( I_{p} + U_k \Sigma_k^{-1} A_k^{-1} B_k \big). 
\end{equation}
Since $\lambda_{3,k} \geq 0$ by definition, and $e e\t \succeq 0$, it follows that $R_k  + \lambda_{2,k} e e\t \succ 0$. Therefore, taking again the determinant of both sides of \eqref{contradicion2} leads to a contradiction.
\end{proof}

\section{Numerical examples and run-time analysis}
To illustrate our method, we will first consider the problem of path planning for a quadrotor in a 2D plane. 
The lateral and longitudinal dynamics of the quadrotor will be modeled as a triple integrator, with state matrices
\begin{equation*}
    A = \begin{bmatrix} \ I_2 & \Delta T I_2 & 0_2 \\ 0_2 & I_2  & \Delta T I_2 \\ 0_2 & 0_2 & I_2 \end{bmatrix}, \quad B = \begin{bmatrix} 0_2 \\ 0_2 \\ \Delta T I_2  \end{bmatrix}, \quad  D = 0.1 I_6 ,
\end{equation*}
a time step of $\Delta T = 0.1$~sec, a horizon of $N=60$ and boundary conditions 
\begin{align*}
& \Sigma_i = I_6, \quad \Sigma_f = 0.1 I_6, \\
& \mu_i =  \begin{bmatrix} \ 20 & 0_{1 \times 5}  \end{bmatrix}\t, \quad  \mu_f = 0_{6 \times 1}.
\end{align*}
The feasible state space and control input space are characterized by bounding boxes expressed in the form of \eqref{constraints_lin} with parameters
\begin{align*}
    & \alpha_x = \Big\{ \left[\begin{array}{ccc} \pm 1 & 0 & 0_{1 \times 4} \end{array}\right]\t, \left[\begin{array}{ccc} 0 & \pm 1 & 0_{1 \times 4} \end{array}\right]\t \Big\}, \\
    & \beta_x = \{22, \;-3, \;  7, \;  -7\} \\
    & \alpha_u = \Big \{ \left[ \begin{array}{cc} \pm 1 & 0 \end{array} \right]\t, \left[\begin{array}{cc} 0 & \pm1 \end{array}\right]\t \Big\}, \quad  
    \beta_u = \{\pm 25, \pm 25 \}.
\end{align*}
Also, two position waypoints are implemented by constraining the first two components of the state at time steps 20 and 40 of the steering horizon. 
The chance constraint linearization is performed around $\Sigma_r = 1.2 I_6$ and $ Y_r = 15 I_2$. 
All optimization problems are solved in Matlab using MOSEK~\cite{aps2019mosek} and YALMIP~\cite{Lofberg2004}.
The resulting optimal steering as well as the control effort is illustrated in Figure \ref{fig: merged}.
The feasible set in each figure is denoted with green lines and the mean of each signal with a dashed black line. 
The 3-sigma confidence level bounds are represented with blue ellipses for the state covariance and by the light-blue area around the mean control signal. 
Initial and terminal values for the state 3-sigma confidence ellipses as well as the waypoints are denoted with red.  
\begin{figure}
  \centering
  \includegraphics[width = 0.9\columnwidth]{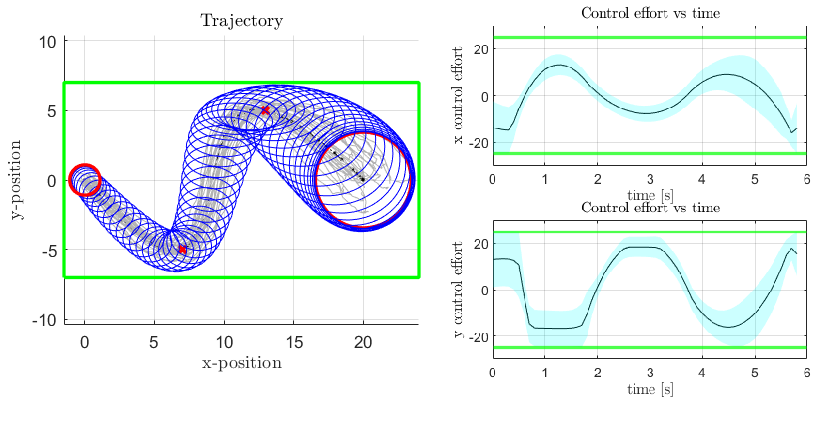}
  \caption{Left: Resulting trajectory, Right: required control effort. In both Figures, green lines represent the feasible part of the state space and control action space.}
  \label{fig: merged}
\end{figure}

In the previous example, the plant dynamics were assumed linear. In the next example, we will explore the performance of the CS algorithm for controlling the non-linear quadrotor dynamics around an aggressive reference trajectory. The nonlinear system describing a quadrotor is 
\begin{subequations} \label{NLDS}
    \begin{align}
        & \dot{r} = v, \\
        & \dot{v} = 1/m(e_3 g + R(q) \hat{e}_3 \tau + w_f),\\
        & \dot{q} = S(q) (\omega + w_m), 
    \end{align}
\end{subequations}
where $r, \; v$ represent the position and velocity of the quad in an inertial coordinate frame, $q = [\psi \; \phi \; \theta]\t$ represent the attitude, parameterized using ZYX Euler angles. The system inputs are the total thrust $\tau$ and the body-frame angular rates $\omega\t$. 
In this setup, it is assumed that the thrust and rotational velocity commands can be realized by means of some low-level controller running at a higher frequency. 
The matrix $R(q)$ is the standard rotation matrix and $S(q)$ is the matrix converting the body-frame angular rates $\omega$ to Euler angle rates $\dot{q}$. 
The total quadrotor mass is denoted by $m$, the acceleration due to gravity by $g$, and the unit vector in the $z$ direction by $e_3$. 
The vectors  $w_f, w_m$ represent disturbances. This system is known to be differentially flat, with a flat output $\zeta = [ r_x \; r_y \; r_z \;  \psi]\t$ \cite{mellinger2011minimum}.
That is, given any smooth trajectory in the flat output, the commands required to realize this trajectory as well as the values of all the state variables throughout can be described in terms of the flat output and its derivatives. 
Exploiting this result, and considering $\psi = 0$ throughout the trajectory for simplicity, we generate a smooth, discrete-time nominal path for the flat output using the model of a triple integrator.

For controlling the uncertainty, we consider a linearization of the system around the nominal trajectory. To this end, note that the nonlinear plant \eqref{NLDS} can be written in the general form $ \dot{x} = f(x, u, w)$ and discretized using a first-order difference approximation, yielding 
\begin{equation}
    x_{k+1} = F(x_k, u_k, w_k) = x_k + \Delta T f(x_k, u_k, w_k),
\end{equation}
where $\Delta T$ is the sampling step. From there, using a Taylor series expansion, the system can be approximated by 
\begin{equation*}
    x_{k+1} \approx F(\bar{x}_k, \bar{u}_k, 0) + \frac{\partial F}{\partial x} (x_k-\bar{x}_k) + \frac{\partial F}{\partial u}(u_k-\bar{u}_k)+ \frac{\partial F}{\partial w}w_k 
\end{equation*}
Noticing that $\bar{x}_{k+1} = F(\bar{x}_k, \bar{u}_k, 0)$, the deviation from the nominal trajectory $\tilde{x}_k = x_k - \bar{x}_k$ can be propagated using the LTV system
\begin{equation} \label{error_dynamics}
    \tilde{x}_{k+1} = A_k \tilde{x}_k + B_k \tilde{u}_k + (D_k + \tilde{D}) w_k,
\end{equation}
%
 %
%
%
%
where $ A_k, \; B_k, \; D_k$ are the Jacobian matrices of the Taylor expansion evaluated around the nominal trajectory and control input, and zero disturbance, $\tilde{u}_k$ is the deviation from the nominal input and $w_k$ is the external disturbance, assumed here to be zero mean white noise with unitary covariance. 
Finally, $\tilde{D}$ accounts for all discretization and linearization errors. 
The control of the deviation from the nominal trajectory can therefore be cast as a covariance steering problem subject to the dynamics of \eqref{error_dynamics}. 
To this end, consider the standard CS problem with cost matrices $Q_k = 10 I_6  \; R_k = 0.1 I_4$, and initial and final terminal covariances 
\begin{align*}
    & \Sigma_i = \blkdiag( 10^{-2} I_3, 10^{-3} I_6), \\
    & \Sigma_f = \blkdiag( 5 \cdot 10^{-4} I_3, 10^{-3} I_6) 
\end{align*}
The first $3 \times 3$ blocks of the covariance matrices for all time steps after $k = 50$ are constrained to be smaller than $2\cdot10^{-3} I_3$, guaranteeing that the trajectory stays within a prescribed tube with respect to the nominal one, which is the minimum jerk path subject to the dynamics of a discrete, triple integrator with initial and terminal boundary conditions
\begin{equation}
    \bar{x}_0 = [0, \; 0.5, \; 0, \; 0_{1\times 6}]\t, \; \bar{x}_N = [0, \; -0.5, \; 0, \; 0_{1 \times 6}]\t
\end{equation}
and four position waypoints (see Figure \ref{fig: Covariance steering}). 
The sampling time was chosen as $\Delta T = 0.01 s$ and the steering horizon was $N = 500$ steps. The resulting steering is illustrated in Figure \ref{fig: Covariance steering}, while the required control effort is in Figure \ref{fig: control effort}. 
\begin{figure}
  \centering
  \includegraphics[width = 0.8\linewidth ]{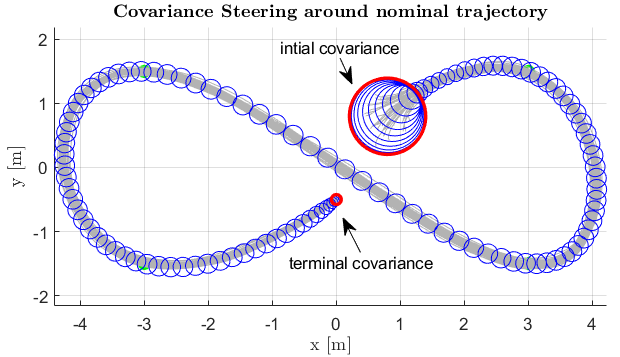}
  \caption{Uncertainty control around nominal trajectory.}
  \label{fig: Covariance steering}
\end{figure}
\begin{figure}
  \centering
  \includegraphics[width = 1 \linewidth ]{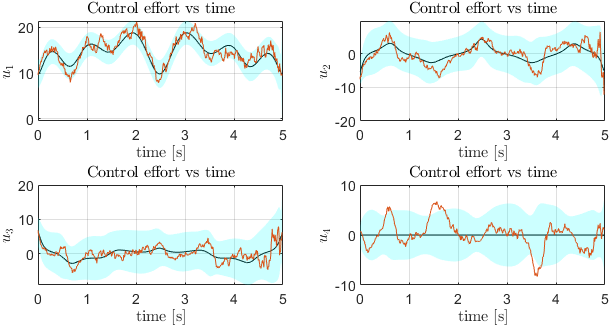}
  \caption{Required control effort}
  \label{fig: control effort}
\end{figure}

Finally, we present a run-time and resulting optimization problem size comparison between different methods for solving the unconstrained covariance steering problem. 
To evaluate the performance of each algorithm, random state space models of various sizes were generated using Matlab's \textsf{\small drss()} command. 
For each system, we use as many noise channels as state variables and half as many input channels as state variables.  
The analysis was performed for systems of varying size and a fixed steering horizon of 32 time steps, as well as for varying time horizons for an $8 \times 8$ system. 
The results are summarized in Tables \ref{ssize_analysis} and \ref{horizon_analysis} respectively. 
Run times are measured in seconds and the problem size is the number of decision variables in each program. 
The empty cells are due to the program running out of memory. 
The simulations were carried out in Matlab 2022 running on an 11\textsuperscript{th} Gen. Intel Core i7-11800H and 16 GB of RAM.
\begin{table}[ht!] 
  \centering
    \caption{Run-time comparison for varying state space size.}
  \begin{tabular}{|c|c|c|c|c|c|c|}
    \hline
    \multirow{2}{*}{\textbf{$n$}} & \multicolumn{2}{c|}{\textbf{Approach 1, \cite{bakolas2018finite}}} & \multicolumn{2}{c|}{\textbf{Approach 2, \cite{okamoto2019optimal}}} & \multicolumn{2}{c|}{\textbf{Proposed approach}}\\
    \cline{2-7}
    & \textbf{p. size} & \textbf{r. time} & \textbf{p. size} & \textbf{r. time} & \textbf{p. size} & \textbf{r. time} \\
    \hline
    4  & 3200   & 93.28 & 256   & 0.20   & 884    & 0.03   \\ \hline
    8  & 10496  & -     & 1024  & 2.91   & 3536   & 0.18   \\ \hline
    16 & 37376  & -     & 4096  & 138.07 & 14144  & 2.59   \\ \hline
    32 & 140288 & -     & 16384 & -      & 56576  & 151.76 \\ \hline
  \end{tabular}
  \label{ssize_analysis}
\end{table}
\vspace{-2 mm}
\begin{table}[h!] 
  \centering
  \caption{Run-time comparison for varying horizon size.}
  \begin{tabular}{|c|c|c|c|c|c|c|}
    \hline
    \multirow{2}{*}{\textbf{$N$}} & \multicolumn{2}{c|}{\textbf{Approach 1, \cite{bakolas2018finite}}} & \multicolumn{2}{c|}{\textbf{Approach 2, \cite{okamoto2019optimal}}} & \multicolumn{2}{c|}{\textbf{Proposed approach}}\\
    \cline{2-7}
    & \textbf{p. size} & \textbf{r. time} & \textbf{p. size} & \textbf{r. time} & \textbf{p. size} & \textbf{r. time} \\
    \hline
    8   & 640     & 3.57   & 256   & 0.12  & 848   & 0.04 \\ \hline
    16  & 2306    & 76.74  & 512   & 0.70  & 1744  & 0.08 \\ \hline
    32  & 8704    & -      & 1024  & 3.33  & 3536  & 0.17 \\ \hline
    64  & 33792   & -      & 2048  & 19.27 & 7120  & 0.36 \\ \hline
    128 & 133120  & -      & 4096  & -     & 14288 & 0.75 \\ \hline
    256 & 528384  & -      & 8129  & -     & 28624 & 1.60 \\ \hline    
  \end{tabular}
  \label{horizon_analysis}
\end{table}

It is clear that the proposed approach outperforms the state-of-the-art algorithms significantly, by over an order of magnitude for almost all cases. 
Also, it is worth noting that problem \eqref{convex_eq} is a linear semidefinite program, while the formulations of \cite{bakolas2018finite} and \cite{okamoto2019stochastic} result in quadratic semidefinite programs, which need to be converted to linear ones using suitable relaxations, increasing further the number of decision variables needed as well as the complexity of the problem. 
Finally, Problem \eqref{convex_eq} involves $N-1$ LMIs of dimensions $p \times p$ as opposed to a single large LMI of dimensions $(N+2)n \times (N+2)n$ for the terminal covariance constraint used in methods \cite{bakolas2018finite, okamoto2019stochastic}. 
As suggested in \cite{aps2019mosek}, multiple smaller LMIs can be solved more efficiently compared to a single larger one due to the resulting sparsity of the constraints. 
This also explains why although Approach~2 of~\cite{okamoto2019stochastic} results in smaller problem sizes compared to the proposed approach, still has significantly larger solution times.

\section*{ACKNOWLEDGMENT}
The authors would like to sincerely thank Dr. Fengjiao Liu for her constructive comments and discussion on the paper, and Ujjwal Gupta for his help with the quadrotor example.
Support for this work has been provided by 
ONR award N00014-18-1-2828 and NASA ULI award \#80NSSC20M0163.
This article solely reflects the opinions and conclusions of its authors and not of any NASA entity.

\bibliographystyle{ieeetr}
\bibliography{george_refs}
\end{document}